\newcommand{\tr}{{\rm Tr}}
\newcommand{\F}{{\mathbb F}}
\newcommand{\cC}{{\mathcal C}}
\newcommand{\gS}{{\mathrm S}}
\newcommand{\B}{{\mathbb B}}
\newtheorem{theorem}{Theorem}[section]
\newtheorem{lemma}[theorem]{Lemma}
\theoremstyle{definition}
\newtheorem{definition}[theorem]{Definition}
\newtheorem{corollary}[theorem]{Corollary}
\theoremstyle{remark}
\newtheorem{remark}[theorem]{Remark}
\numberwithin{equation}{section}
\begin{document}

\title{Optimal Combinatorial Neural Codes with Matched Metric $\delta_{r}$: Characterization and Constructions}

\author{Aixian Zhang}
\address{Department of Mathematical Sciences, Xi'an  University of Technology,
Shanxi, 710054, China.}
\email{zhangaixian1008@126.com}

\author{Xiaoyan. Jin}
\address{  Department of Mathematical
Sciences, North West University,
 Shannxi, 710127, China.}
\email{jxymg@126.com }

\author{Keqin Feng}
\address{Department of Mathematical
Sciences, Tsinghua University,
 Beijing, 100084, China. }
\email{fengkq@tsinghua.edu.cn}
%    \thanks will become a 1st page footnote.
%\thanks{The first author was supported in part by NSF Grant \#000000.}

%    General info
\subjclass[2010]{11T06, 11T55}

%\date{Januar1 1, 2001 and, in revised form, June 22, 2001.}

%\dedicatory{This paper is dedicated to our advisors.}

\keywords{Combinatorial neural codes, asymmetric discrepancy, bent functions,
Hamming, Singleton and Plotkin bounds..}

\begin{abstract}
Based on the theoretical neuroscience, G. Cotardo and A. Ravagnavi in  \cite{CR} introduced a kind of
asymmetric binary codes called combinatorial neural codes (CN codes for short),
with a ``matched metric" $\delta_{r}$ called asymmetric discrepancy, instead of the Hamming distance
$d_{H}$ for usual error-correcting codes. They also presented the Hamming, Singleton and Plotkin bounds for CN codes
with respect to $\delta_{r}$ and asked how to construct the CN codes $\cC$ with large size $|\cC|$ and $\delta_{r}(\cC).$
In this paper we firstly show that a binary code $\cC$ reaches one of the above bounds for $\delta_{r}(\cC)$ if and only if
$\cC$ reaches the corresponding bounds for $d_H$ and $r$ is sufficiently closed to 1. This means that all
optimal CN codes come from the usual optimal codes. %(perfect codes, MDS codes or the codes meet the usual Plotkin bound).
Secondly we present several constructions of CN codes with nice and flexible parameters $(n,K, \delta_r(\cC))$ by using bent functions.
\end{abstract}

\maketitle

%% The correct journal style for \specialsection is all uppercase; a known bug
%% in amsart.cls prevents this, so input must be uppercase until it is fixed.
%\specialsection*{This is a Special Section Head}

%%%%%%%%%%%%%%%%%%%%%%%%%%%%%%%%%%%%%%%%%%%%%%%%%%%%%%%%%%%%%%%%%%%%%%%%
%\footnote{Here is an example of a footnote. Notice that this footnote
%text is running on so that it can stand as an example of how a footnote
%with separate paragraphs should be written.
%\par
%And here is the beginning of the second paragraph.}%
%%%%%%%%%%%%%%%%%%%%%%%%%%%%%%%%%%%%%%%%%%%%%%%%%%%%%%%%%%%%%%%%%%%%%%%%

\section{Introduction}\label{sec-one}
Shannon's work in 1948 pioneered two areas of research : information theory and mathematical coding theory.
Information theory has had a strong influence on the theoretical neuroscience (\cite{Attick},\cite{BT} et al.),
 the ideals in mathematical coding theory have received lots of attention more than decade ago and motivated by
the study of neurons called place cells. The discovery of the place cells by O$^\prime$keefe and Dostrovsky
\cite{OD} was a major breakthrough that led to a  shared 2014 Nobel Prize in Medicine or Physiology for O$^\prime$keefe.
A place cell encodes spatial information about an organism's surrounding by firing precisely when the organism is in the
corresponding place field. In this context,  a codeword represents the neural firing patten that occurs when the organism is
in the corresponding region of its environment. The set of codewords is called a neural code.

Place fields are modeled by several convex open sets which may be overlapped to each other.
The neural codes have been researched by the means of algebraic and combinatorial methods
(\cite{CFS},\cite{CIVY},\cite{GOY}).
Later, C. Curto et al. \cite{CIMRW} initiated the study of neural codes in more closer connection
with the classical error-correcting codes by discreting the convex place fields
(or called receptive fields in  \cite{CIMRW}).

Consider $n$ neurons $\{ 1,2,\cdots,n\}$ in brain. Each neuron $i$ has its
receptive fields $ \gS_i, X=\bigcup^n\limits_{i=1} \gS_i.$ For a stimulus $x \in X,$
we get a codeword $c(x)=(x_1, x_2,\cdots, x_n) \in \F^n_2, $ where for $1 \leq i \leq n,$
$$
x_i=  \left \{
\begin{array}{ll}
1, & \mbox{if} \ x_i \in  \gS_i , \\
0, & \mbox{if} \ x_i \not\in  \gS_i,
\end{array}
\right.
$$
$x_i \in  \gS_i$ means the neuron  $i$ fired by the stimulus  $x$,
$\ x_i \not\in  \gS_i$ means the neuron $i$  does not fire by the  stimulus $x$.
In Figure 1, $x$ is a stimulus, the codeword of $x$ is $c(x)=(00110)$.

Zero codeword $c=(0,0,\cdots, 0)$ means that no stimulus acts on $X$ at all.
For a set $\Sigma$ of $K$ stimuli ($K=|\Sigma|),$ we get a subset
$\cC=\{c(x):x \in \Sigma\}$ of $\F^n_2$ with $K$ codewords, called a combinatorial neural code (CN code for short).

\begin{figure}[h]
	\centering
	\includegraphics[scale=0.5]{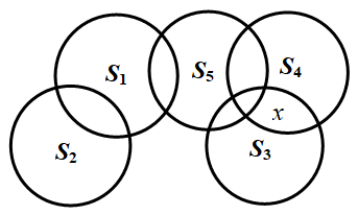}
 \centering
 \caption{ The codeword of $x$ is $c(x)=(00110)$}
   \end{figure}

In the real case, the ``receptive" vector
$c(x)^{\prime}=(x^{\prime}_1, x^{\prime}_2,\cdots,x^{\prime}_n) \in \F^n_2$ of the stimulus $x$
may have ``discrepancy" with $c(x)=(x_1, x_2,\cdots,x_n).$ The ideal model in \cite{CIMRW} is that

(I) $x_i =1$ and $x^{\prime}_i =0$ with probability $q < \frac{1}{2}$ which means that $x \in \gS_i$
but neuron $i$ does not fire with probability $q.$

(II) $x_i =0$ and $x^{\prime}_i =1$ with probability $p$ and $0 \leq p \leq q$ which means that $x \not\in \gS_i$
but neuron $i$ is fired with probability $p.$

From $c(x)$ to $c(x)^{\prime}$, we may consider the transmission in the following binary
asymmetric memoryless channel.

\begin{figure}[h]
	\centering
	\includegraphics[scale=0.5]{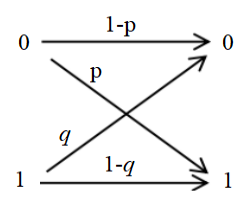}
\begin{center}\caption{Binary asymmetric memoryless channel}\end{center}
\end{figure}
For $p=q,$ the channel is symmetric. But usually $p$ is smaller than $q$ in the neuro science. It is hope to recover
the codeword $c(x)$ (and the acting stimulus $x$ ) from the received $c(x)^{\prime}.$
This leads to consider a suitable metric in the space $\F^n_2$ with respect to the above asymmetric channel.
Such problem has been researched as early as in 1970-1980's (\cite{CR},\cite{FLX}, and \cite{Kl}).
Further researches on ``Matched " metric (\cite{FW},\cite{OD},\cite{Qu}) to focus the ability on the usual decoding algorithms
(the maximum likehood decoding and nearest neighbour decoding ) for CN codes. In \cite{CR},
the authors suggested and discussed two kinds of metrics for CN codes. In this paper, we consider one of them.

Let $0 \leq p \leq q < \frac{1}{2}$ and $r=\log_{\frac{q}{1-p}}(\frac{p}{1-q}).$ It is easy to see that $r \geq 1$
and $r=1$ if and only if $p=q$ (the symmetric channel).
\begin{definition}(\cite{CR})
For $y=(y_1, y_2, \cdots, y_n)$ and  $x=(x_1, x_2, \cdots, x_n) \in \F^n_2,$ let
\begin{eqnarray*}
d_{10}(y,x)&=& \sharp \{ i  \mid 1 \leq i \leq n, (y_i, x_i)=(1,0)\} \\
d_{01}(y,x)&=& \sharp \{ i  \mid 1 \leq i \leq n, (y_i, x_i)=(0,1)\}.
\end{eqnarray*}
The (asymmetric) discrepancy between $y$ and $x$ is defined by
$$\delta(y,x)=\delta_{r}(y,x)=r d_{10}(y,x)+d_{01}(y,x).$$
\end{definition}
For a combinatorial neural code $\cC \in \F^n_2,$ the fundamental parameters of $\cC$ is $(n,K, \delta(\cC)),$
where $K=|\cC|\geq 2$ is the size of $\cC$ and
$$\delta(\cC))=\delta_{r}(\cC)=\min \{ \delta_{r}(c,c^{\prime}):c, c^{\prime} \in \cC, c \neq c^{\prime} \},$$
called the minimum discrepancy of $\cC.$

For $p=q,$ we get $r=1$ and
$\delta_{1}(y,x)=d_{10}(y,x)+d_{01}(y,x)=d_H(y,x)$ is the Hamming distance.
Therefore $\delta_{1}(\cC)=d_H(\cC),$ the usual minimum Hamming distance of $\cC.$
It is easy to see that $\delta(y,x) \geq 0$ and $\delta(y,x) = 0$ if and only if $y=x.$
If $p < q \ (r >1), \delta_{r}(x,y)$ and $\delta_{r}(y,x)$ may be different in general, but $\delta_{r}$
satisfies the triangular inequality (see \cite{CR}, Lemma 2.14).

In the last section of \cite{CR}, the authors raise an open problem: how to construct families of CN codes
with large size $K=|\cC|$ and minimum discrepancy $\delta_{r}(\cC)$ simultaneously.
In order to judge the goodness of a CN code, several classical bounds (Hamming, Singleton and Plotkin bounds)
of binary codes for $d_H$ are generalized to the ones for $\delta_{r}$ (see \cite{CR}, Lemma 4.7).

The aim of this paper is twofold. Firstly, in Section \ref{sec-two} we show that a CN code $\cC$
reaches the Hamming , Singleton or Plotkin bound for $\delta_{r}$ if and only if $\cC$ reaches the
corresponding bounds for $d_H$ and $r$ is sufficiently closed to 1 (so that $\delta_{r}$ is closed to $d_H=\delta_{1}$).
Therefore the optimal CN codes should be optimal as usual error-correcting code for $d_H.$
The optimal binary codes reaching
one of the above three bounds for usual Hamming distance $d_H$ have been found and their parameters $(n,K,d_H)$ are very limited.
Next, in Section \ref{sec-three},
we present a general construction of binary codes $\cC$
by the valuations of boolean functions with $m$ variables
at a subset $S$ of $\F^m_2,$ and show a formula on $\delta_{r}(\cC)$ in terms of such boolean functions.
This construction is given in Ding's paper \cite{Ding2} and since then many nice binary codes for $d_H$
have been found (see the survey paper \cite{LM}).
In this paper, we show three constructions for bent function
and determine the value of $\delta_{r}(\cC)$ for such code $\cC.$
 Section \ref{sec-four} is conclusion.

\section{A Characterization of optimal Combinatorial Neural Codes}\label{sec-two}

Firstly we introduce some bounds of the CN  codes
given in \cite{CR}.

\begin{lemma}(\cite{CR}, Lemma 4.7)\label{thm-bounds}
Let $ 0\leq p \leq q< \frac{1}{2}, r=\log_{\frac{q}{1-p}} (\frac{p}{1-q}) \geq 1$ and
$\cC$ be a CN code with parameters $(n,K,\delta(\cC)), k=\log_{2}K$ and $\delta(\cC)=\delta_{r}(\cC).$
We have
\begin{itemize}
\item[(1)]\ (Singleton bound) $k \leq n-\lceil \frac{2\delta(\cC)}{r+1}\rceil +1,$
where $\lceil A \rceil$ is the minimal integer $a$ such that $a \geq A;$

\item[(2)]\ (Hamming bound) $K \leq 2^{n}/ \sum^T\limits_{i=0}\binom{n}{i}, T=\lfloor \frac{\delta(\cC)}{r+1} \rfloor, $
where $\lfloor A \rfloor$ is the maximal integer $a$ such that $a < A;$

\item[(3)]\ (Plotkin bound) if $d=\lceil \frac{2\delta(\cC)}{r+1} \rceil$ satisfies $2d > n,$
then $K \leq [\frac{2d}{2d-n}],$ where $[A]$ is the maximal integer $a$ such that $a \leq A.$
\end{itemize}
\end{lemma}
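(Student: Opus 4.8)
The plan is to deduce all three bounds from their classical counterparts for the Hamming distance, via a single symmetrization identity linking $\delta_r$ and $d_H$. The point is that although $\delta_r$ is asymmetric, its symmetrized version is a constant multiple of $d_H$, so any lower bound on $\delta(\cC)$ converts into a lower bound on $d_H(\cC)$, after which the standard binary Singleton, Hamming and Plotkin bounds apply verbatim.

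First I would record the identity
\[
\delta_r(c,c') + \delta_r(c',c) = (r+1)\, d_H(c,c') \qquad \text{for all } c,c' \in \F_2^n ,
\]
which is immediate from the definition: since $d_{10}(c',c)=d_{01}(c,c')$ and $d_{01}(c',c)=d_{10}(c,c')$, summing the two discrepancies weights each disagreeing coordinate by $r+1$, and $d_{10}(c,c')+d_{01}(c,c')=d_H(c,c')$. Now for distinct $c,c'\in\cC$ both $\delta_r(c,c')\ge\delta(\cC)$ and $\delta_r(c',c)\ge\delta(\cC)$, because $\delta(\cC)$ is the minimum over all ordered pairs; adding these and using the identity gives $(r+1)d_H(c,c')\ge 2\delta(\cC)$. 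Since $d_H$ is integer-valued this yields the key inequality
\[
d_H(\cC) \ge d := \left\lceil \frac{2\delta(\cC)}{r+1} \right\rceil .
\]

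With this in hand each bound is a transcription of its $d_H$-analogue. For \emph{Singleton}, the classical bound $k\le n-d_H(\cC)+1$ combined with $d_H(\cC)\ge d$ gives $k\le n-d+1$, which is exactly the stated inequality. For \emph{Plotkin}, the map $x\mapsto \tfrac{2x}{2x-n}$ is decreasing for $2x>n$; the hypothesis $2d>n$ together with $d_H(\cC)\ge d$ forces $2d_H(\cC)>n$, so the classical bound $K\le[\tfrac{2d_H(\cC)}{2d_H(\cC)-n}]$ is at most $[\tfrac{2d}{2d-n}]$ by monotonicity and by passing floors through the inequality. For \emph{Hamming}, the sphere-packing bound reads $K\le 2^n/\sum_{i=0}^{t}\binom{n}{i}$ with packing radius $t=[\tfrac{d_H(\cC)-1}{2}]$; I would check $T:=\lfloor\tfrac{\delta(\cC)}{r+1}\rfloor\le t$, so that shrinking the denominator (summing fewer binomials) only weakens, hence preserves, the bound. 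Here the paper's convention that $\lfloor A\rfloor$ is the largest integer \emph{strictly} below $A$ is precisely what makes this go through: $T<\tfrac{\delta(\cC)}{r+1}\le \tfrac{d_H(\cC)}{2}$ gives $2T<d_H(\cC)$, so $2T\le d_H(\cC)-1$ and $T\le t$.

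I expect the only delicate point to be this floor/ceiling bookkeeping in the Hamming bound, where the strict-inequality floor convention is essential — with the ordinary floor the inequality $T\le t$ can fail by one when $d_H(\cC)$ is even. The Plotkin step likewise needs the remark that $\tfrac{2x}{2x-n}$ is decreasing, so that replacing $d_H(\cC)$ by the smaller quantity $d$ keeps the inequality in the correct direction. Everything else is a direct appeal to the classical binary bounds, once the reduction $d_H(\cC)\ge d$ is established.
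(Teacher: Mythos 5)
Your proof is correct, but note that the paper itself offers no proof of this lemma: it is imported verbatim from \cite{CR} (Lemma 4.7), so there is no in-paper argument to compare against line by line. What you have done is supply an independent derivation that reduces all three bounds to their classical $d_H$-counterparts (Corollary \ref{thm-main}, which is a standard result from \cite{MS} and does not depend circularly on Lemma \ref{thm-bounds}). Your symmetrization identity $\delta_r(c,c')+\delta_r(c',c)=(r+1)d_H(c,c')$, together with the observation that $\delta(\cC)$ is a minimum over \emph{ordered} pairs, yields $d_H(\cC)\ge\lceil 2\delta(\cC)/(r+1)\rceil$; this is exactly the inequality $\delta_r(\cC)\le\frac{r+1}{2}d_H(\cC)$ that the paper also imports without proof as Lemma \ref{thm-cr}(1), and it is the same mechanism the authors run in reverse in the proof of Theorem \ref{thm-six}. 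So your route is consistent with, and effectively fills in, the machinery the paper takes for granted. Your handling of the two delicate points is right: the strict-floor convention $\lfloor A\rfloor<A$ is genuinely needed to get $T\le t=\lfloor\frac{d_H(\cC)-1}{2}\rfloor$ when $2\delta(\cC)/(r+1)$ equals an even $d_H(\cC)$, and the monotone decrease of $x\mapsto 2x/(2x-n)$ on $2x>n$ is what lets you pass from $d_H(\cC)$ to the smaller $d$ in the Plotkin bound. (One incidental remark: the Plotkin bound as printed in Corollary \ref{thm-main}(3) has a typographical error in the denominator, $d_H(\cC)-n$ rather than $2d_H(\cC)-n$; you correctly use the latter, which is also what the paper uses in the proof of Theorem \ref{thm-six}.)
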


If $p=q,$ then $r=1$ and $\delta(\cC)=\delta_{1}(\cC)=d_{H}(\cC)$ is the minimal Hamming
distance of $\cC.$ Lemma \ref{thm-bounds} presents the bounds of $\cC$ with respect to $d_{H}.$

\begin{corollary}\label{thm-main}
Let $\cC$ be a binary code with parameters $(n,K,d_{H}(\cC)),k=\log_{2}K.$ We have
\begin{itemize}
\item[(1)]\  (Singleton bound) $k \leq n-d_{H}(\cC)+1;$

\item[(2)]\ (Hamming bound) $K \leq 2^n / \sum^t\limits_{i=0}\binom{n}{i},$ where
$t=\lfloor \frac{d_{H}(\cC)}{2}\rfloor=[\frac{d_{H}(\cC)-1}{2}];$

\item[(3)]\ (Plotkin bound) if $2d_{H}(\cC)>n,$ then $K \leq [\frac{2d_{H}(\cC)}{d_{H}(\cC)-n}].$
\end{itemize}
\end{corollary}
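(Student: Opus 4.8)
The plan is to read Corollary \ref{thm-main} as nothing more than the specialization of Lemma \ref{thm-bounds} to the symmetric channel. Concretely, I would take $p=q$, so that $r=\log_{\frac{q}{1-p}}(\frac{p}{1-q})=1$ and, as observed immediately after the definition of $\delta_r$, the discrepancy collapses to the Hamming distance: $\delta(\cC)=\delta_1(\cC)=d_H(\cC)$. The entire argument is then a matter of substituting $r=1$ into each of the three estimates of Lemma \ref{thm-bounds} and checking that the ceilings and floors there simplify to the classical quantities appearing here.

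For the Singleton and Plotkin bounds this is immediate. Since $r+1=2$, the inner expression $\frac{2\delta(\cC)}{r+1}=\frac{2d_H(\cC)}{2}=d_H(\cC)$ is already an integer, so $\lceil \frac{2\delta(\cC)}{r+1}\rceil=d_H(\cC)$, and part (1) of Lemma \ref{thm-bounds} becomes $k\leq n-d_H(\cC)+1$, which is exactly part (1) of the corollary. For Plotkin the same simplification gives $d=\lceil \frac{2\delta(\cC)}{r+1}\rceil=d_H(\cC)$, so the hypothesis $2d>n$ reads $2d_H(\cC)>n$, and substituting $d=d_H(\cC)$ into $K\leq[\frac{2d}{2d-n}]$ produces the bound of part (3).

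The only step that requires care is the Hamming bound, because its summation limit is governed by the paper's nonstandard floor convention, in which $\lfloor A\rfloor$ denotes the largest integer \emph{strictly} less than $A$. With $r=1$ one has $T=\lfloor \frac{\delta(\cC)}{r+1}\rfloor=\lfloor \frac{d_H(\cC)}{2}\rfloor$ in that strict sense, and I would show that this equals the usual error-correcting radius $t=[\frac{d_H(\cC)-1}{2}]$ (with $[\,\cdot\,]$ the ordinary floor) by a short parity case analysis: if $d_H(\cC)=2m$ then the largest integer strictly below $m$ is $m-1=[\frac{2m-1}{2}]$, while if $d_H(\cC)=2m+1$ then it is $m=[\frac{2m}{2}]$. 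Hence $T=t$, the two summation ranges coincide, and part (2) follows. I expect this reconciliation of the strict-floor convention with the standard radius $[\frac{d_H-1}{2}]$ to be the only genuine (and still minor) obstacle; everything else is the direct substitution $r=1$ together with the identity $\delta_1=d_H$.
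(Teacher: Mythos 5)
Your proof is correct and is exactly the paper's (implicit) argument: the corollary is obtained by setting $p=q$, hence $r=1$ and $\delta_{1}(\cC)=d_{H}(\cC)$, in Lemma \ref{thm-bounds}, with the only point requiring care being precisely the one you isolate, namely reconciling the paper's strict-floor convention for $\lfloor \frac{\delta(\cC)}{r+1}\rfloor$ with the usual packing radius $t=[\frac{d_{H}(\cC)-1}{2}]$ via the parity case analysis. Note only that your substitution correctly yields the denominator $2d_{H}(\cC)-n$ in the Plotkin bound, as used elsewhere in the paper; the denominator $d_{H}(\cC)-n$ printed in part (3) of the corollary is a typo.
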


 In this section we show that for a binary code $\cC,$ the parameters of $\cC$ reaches the Singleton
 bound, Hamming bound or Plotkin bound for $\delta_{r}(\cC)$ if and only if $\cC$ reaches
 the corresponding bound for $d_{H}(\cC)$ and $r$ is sufficiently close to 1. Firstly we need some results in
 \cite{CR} which are used in this paper.

\begin{lemma}\label{thm-cr}
Let $\cC$ be a CN code with parameters $(n,K,\delta_{r}(\cC)).$ We have
\begin{itemize}
\item[(1)]\ (\cite{CR}, \ Proposition 4.3) $d_{H}(\cC) \leq \delta_{r}(\cC) \leq \frac{r+1}{2}d_{H}(\cC);$

\item[(2)]\ (\cite{CR}, \ Proposition 4.8) if $\cC$ is a linear code, then $\delta_{r}(\cC)=d_{H}(\cC).$
\end{itemize}
\end{lemma}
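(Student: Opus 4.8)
The plan is to establish (1) by comparing $\delta_{r}$ with $d_{H}$ coordinate-class by coordinate-class, using that $\delta_{r}(y,x)=r\,d_{10}(y,x)+d_{01}(y,x)$ whereas $d_{H}(y,x)=d_{10}(y,x)+d_{01}(y,x)$, and then to deduce (2) by exhibiting a pair of codewords on which $\delta_{r}$ degenerates to $d_{H}$. For the lower bound in (1) I would argue pointwise: for every ordered pair of distinct codewords $c,c'$, the hypothesis $r\ge 1$ gives
\[
\delta_{r}(c,c')=r\,d_{10}(c,c')+d_{01}(c,c')\ge d_{10}(c,c')+d_{01}(c,c')=d_{H}(c,c')\ge d_{H}(\cC).
\]
Minimizing the left-hand side over all such pairs yields $\delta_{r}(\cC)\ge d_{H}(\cC)$.

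The upper bound is the step that needs the one genuine idea, namely the interaction between the two orientations of a fixed pair. Setting $a=d_{10}(c,c')$ and $b=d_{01}(c,c')$, I note that reversing the roles swaps the two counts, so $d_{10}(c',c)=b$ and $d_{01}(c',c)=a$; hence
\[
\delta_{r}(c,c')+\delta_{r}(c',c)=(ra+b)+(rb+a)=(r+1)(a+b)=(r+1)\,d_{H}(c,c').
\]
This forces $\min\{\delta_{r}(c,c'),\delta_{r}(c',c)\}\le\tfrac{r+1}{2}d_{H}(c,c')$. Applying the inequality to a pair realizing $d_{H}(c,c')=d_{H}(\cC)$ and using that $\delta_{r}(\cC)$ is by definition the minimum of $\delta_{r}$ over \emph{ordered} pairs --- so it is at most both $\delta_{r}(c,c')$ and $\delta_{r}(c',c)$ --- delivers $\delta_{r}(\cC)\le\tfrac{r+1}{2}d_{H}(\cC)$. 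The only point to keep straight is this ordered-pair convention, which is exactly what makes both orientations available.

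For (2), by (1) it suffices to prove $\delta_{r}(\cC)\le d_{H}(\cC)$ under linearity. Since $\cC$ is linear, $\bzero\in\cC$ and $d_{H}(\cC)$ equals the smallest nonzero Hamming weight $w$; I would pick $c\in\cC$ with $\mathrm{wt}(c)=w$. Then $d_{10}(\bzero,c)=0$ because $\bzero$ has no coordinate equal to $1$, while $d_{01}(\bzero,c)=\mathrm{wt}(c)=w$, so $\delta_{r}(\bzero,c)=r\cdot 0+w=w=d_{H}(\cC)$. Therefore $\delta_{r}(\cC)\le\delta_{r}(\bzero,c)=d_{H}(\cC)$, and together with (1) this gives the asserted equality. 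I expect the whole argument to be short; the main (mild) obstacle is simply spotting the orientation-sum identity used for the upper bound in (1).
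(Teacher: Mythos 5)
Your proof is correct: the lower bound follows pointwise from $r\ge 1$, the orientation-sum identity $\delta_r(c,c')+\delta_r(c',c)=(r+1)d_H(c,c')$ cleanly gives the upper bound since $\delta_r(\cC)$ minimizes over ordered pairs, and the comparison with $\bzero$ settles the linear case. The paper itself states this lemma as a citation of Propositions 4.3 and 4.8 of \cite{CR} without reproducing a proof, and your argument is the standard one that those propositions rest on, so nothing further is needed.
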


Now we state the main results in this section.
\begin{theorem}\label{thm-six}
Let $\cC$ be a CN code with parameters $(n,K,\delta_{r}(\cC)).$ Then
\begin{itemize}
\item[(1)] \ $\cC$ reaches the Singleton or Plotkin bound for $\delta_{r}$ if and only if
$\cC$ reaches the corresponding bound for $d_{H}$ and
$\delta_{r}(\cC) > \frac{r+1}{2}(d_{H}(\cC)-1).$
Particularly, if $r < \frac{d_{H}(\cC)+1}{d_{H}(\cC)-1},$ then $\cC$ reaches the
Singleton or Plotkin bound for $\delta_{r}$ if and only if $\cC$ reaches the corresponding bound
for $d_{H}.$

\item[(2)] \ $\cC$ reaches the Hamming bound for $\delta_{r}$ if and only if
$\cC$ reaches the Hamming bound for $d_{H}$ (perfect code), $d_{H}=2t+1$ is odd,
and $\delta_{r}(\cC) > t(r+1).$ Particulary, if $r < \frac{t+1}{t}$ and
$d_{H}(\cC)=2t+1,$ then $\cC$ reaches the Hamming bound for $\delta_r$ if and only if $\cC$ is a perfect code.
\end{itemize}
\end{theorem}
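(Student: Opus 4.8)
The plan is to reduce all three equivalences to a single arithmetic comparison between the integer $d:=d_{H}(\cC)$ and the quantity $D:=\lceil\frac{2\delta}{r+1}\rceil$ that drives the $\delta_{r}$-bounds, where $\delta:=\delta_{r}(\cC)$. By Lemma \ref{thm-cr}(1) we have $d\le\delta\le\frac{r+1}{2}d$, so $\frac{2\delta}{r+1}\le d$ and hence $D\le d$ since $d$ is an integer; moreover $D=d$ holds exactly when $\frac{2\delta}{r+1}>d-1$, i.e. when $\delta>\frac{r+1}{2}(d-1)$. I would establish this Key Claim first: the discrepancy hypothesis of the theorem is precisely a restatement of $D=d$. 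I would also record that the lower bound $\delta\ge d$ makes $D=d$ automatic once $d>\frac{r+1}{2}(d-1)$, which rearranges to $r<\frac{d+1}{d-1}$; this hands us the ``particularly'' clause of part (1) for free, and the analogous computation $\delta\ge 2t+1>t(r+1)\iff r<\frac{t+1}{t}$ does the same for part (2).

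For the Singleton case the two bounds read $k\le n-D+1$ and $k\le n-d+1$. Since $D\le d$, the $d_{H}$-bound is the tighter one, so for the fixed code $\cC$ the chain $k\le n-d+1\le n-D+1$ shows that equality in the $\delta_{r}$-bound forces equality throughout, i.e. $D=d$ together with $k=n-d+1$; the converse is immediate, and the Key Claim finishes statement (1) for Singleton. The Plotkin case runs in parallel using $f(x)=\frac{2x}{2x-n}=1+\frac{n}{2x-n}$, strictly decreasing for $x>n/2$: from $D\le d$ one gets that $2D>n$ implies $2d>n$, so the $d_{H}$-Plotkin bound applies and $K\le[f(d)]\le[f(D)]$; if $\cC$ meets the $\delta_{r}$-Plotkin bound $K=[f(D)]$ the chain collapses to $[f(d)]=[f(D)]=K$, and one then identifies this with meeting the $d_{H}$-Plotkin bound under $D=d$.

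For part (2) I would set $g(s)=2^{n}/\sum_{i=0}^{s}\binom{n}{i}$, which is strictly decreasing in $s$ on $0\le s<n$ because every $\binom{n}{i}$ is positive, and compare $T$ (the largest integer $<\frac{\delta}{r+1}$) with the packing radius $t=[\frac{d-1}{2}]$. From $\frac{\delta}{r+1}\le\frac{d}{2}$ a short case-check on the parity of $d$ gives $T\le t$. If $\cC$ meets the $\delta_{r}$-Hamming bound $K=g(T)$, then $K\le g(t)\le g(T)=K$ forces $g(t)=g(T)$, and here the strict monotonicity of $g$ yields $T=t$ with no floor ambiguity; the equality $K=g(t)$ says exactly that the radius-$t$ balls tile $\F_{2}^{n}$, so $\cC$ is perfect. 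Finally $T=t$ unwinds to $t<\frac{\delta}{r+1}$, that is $\delta>t(r+1)$, which for $d=2t+1$ is the same inequality $\delta>\frac{r+1}{2}(d-1)$ as before.

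The main obstacle I anticipate is the parity constraint in part (2): one must invoke the classical structural fact that a binary code meeting the Hamming bound is perfect and therefore has odd minimum distance $d=2t+1$, so that the even-distance case simply cannot meet the bound and the packing radius is exactly $\frac{d-1}{2}$. A secondary delicate point, in the Plotkin case, is that the outer floor $[\,\cdot\,]$ could a priori take the same value at $D$ and at $d$ even when $D<d$; to keep the ``only if'' airtight one should pass through $D=d$ directly, which is why I would lean on the range $r<\frac{d+1}{d-1}$ (together with $d\le n$) rather than on floor arithmetic alone. Everything else is bookkeeping layered on the Key Claim and the monotonicity of the three bound functions.
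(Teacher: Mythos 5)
Your route is essentially the paper's: everything is funneled through the Key Claim that $\lceil 2\delta_{r}(\cC)/(r+1)\rceil=d_{H}(\cC)$ if and only if $\delta_{r}(\cC)>\frac{r+1}{2}(d_{H}(\cC)-1)$ (a consequence of Lemma \ref{thm-cr}(1)), followed by squeezing the $\delta_{r}$-bound against the $d_{H}$-bound via monotonicity of the three bound functions. Your Singleton and Hamming arguments are complete and coincide with the paper's; your explicit use of the strict monotonicity of $2^{n}/\sum_{i=0}^{s}\binom{n}{i}$ to force $T=t$ is, if anything, a cleaner justification than the paper's one-line assertion, and your handling of the parity constraint via perfectness is exactly what the paper does.

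The one genuine gap is the point you flagged in the Plotkin case and then declined to close. The ``only if'' of statement (1) requires deriving $D=d_{H}(\cC)$ from $K=[f(D)]$, but the squeeze only yields $[f(d_{H})]=[f(D)]=K$, and since $[\,\cdot\,]$ is not injective this does not give $D=d_{H}$; retreating to the range $r<\frac{d_{H}+1}{d_{H}-1}$ proves only the ``particularly'' clause, not the full equivalence. Be aware that the paper's own proof makes precisely the leap you distrusted: it writes that the floor equality gives ``$[\frac{2d_{H}}{2d_{H}-n}]=K$ \emph{and} $d_{H}(\cC)=d$'' with no justification for the second conclusion. In fact the unrestricted equivalence fails. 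Take $n=10$ and $\cC=\{0000000000,\,1111111110\}$, so $K=2$, $d_{H}(\cC)=9$ and $\delta_{r}(\cC)=\min\{9,9r\}=9$ for every $r\geq 1$. For $r=13/10$ one gets $D=\lceil 18/2.3\rceil=8$, $2D>n$, and $[16/6]=2=K$, so $\cC$ attains the Plotkin bound for $\delta_{r}$; it also attains it for $d_{H}$ since $[18/8]=2$; yet $\delta_{r}(\cC)=9\leq 9.2=\frac{r+1}{2}(d_{H}(\cC)-1)$. So your instinct was correct, the floor arithmetic genuinely cannot be repaired in general, and the defensible content of the Plotkin part of (1) is the ``particularly'' clause that your proposal does establish.
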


\begin{proof}
(1) Let $k=\log_{2}K.$ If $\cC$ reaches the Singleton bound for $\delta_{r},$ then
$k=n-\lceil \frac{2 \delta_{r}(\cC)}{r+1}\rceil +1.$
Since $k\leq n-d_{H}(\cC)+1,$
we get $\lceil \frac{2 \delta_{r}(\cC)}{r+1} \rceil \geq d_{H}(\cC).$ But
$d_{H}(\cC) \geq  \lceil \frac{2\delta_{r}(\cC)}{r+1}\rceil$ by Lemma \ref{thm-cr} (1),
we get $d_{H}(\cC) = \lceil \frac{2\delta_{r}(\cC)}{r+1}\rceil$ and $k=n-d_{H}(\cC)+1$
which means that $\cC$ reaches the Singleton bound for $d_{H}.$ Moreover,
\begin{eqnarray*}
\lceil \frac{2 \delta_{r}(\cC)}{r+1} \rceil =d_{H}(\cC)
& \Leftrightarrow  & d_{H}(\cC)-1 <\frac{2 \delta_{r}(\cC)}{r+1} \leq d_{H}(\cC) \\
& \Leftrightarrow & d_{H}(\cC)-1 < \frac{2 \delta_{r}(\cC)}{r+1} \\
 (\mbox{since} \ \
 \frac{ 2 \delta_{r}(\cC)}{r+1} & \leq & d_{H}(\cC) \ \mbox{by Lemma } \ref{thm-cr} \ (1))\\
& \Leftrightarrow & \delta_{r}(\cC) > \frac{r+1}{2}(d_{H}(\cC)-1).
\end{eqnarray*}

Therefore $\cC$ reaches the Singleton bound for $\delta_{r}$ if and only if $\cC$ reaches the Singleton bound for $d_{H}$
and $$\delta_{r}(\cC) > \frac{r+1}{2}(d_{H}(\cC)-1).$$

If $\cC$ reaches the Plotkin bound for $\delta_{r}$, then $2d=2\lceil \frac{2 \delta_{r}(\cC)}{r+1} \rceil > n$
and $K=\left[\frac{2d}{2d-n}\right].$ From $d_{H}(\cC) \geq \frac{2 \delta_{r}(\cC)}{r+1},$ we know that
$d_{H}(\cC) \geq \lceil \frac{2 \delta_{r}(\cC)}{r+1} \rceil$ and
then $$2d_{H}(\cC) \geq 2 \lceil \frac{2 \delta_{r}(\cC)}{r+1} \rceil=2d>n.$$
Furthermore, from $d_{H}(\cC) \geq d >\frac{n}{2},$ we get $\frac{2d_{H}(\cC)}{2d_{H}(\cC)-n} \leq \frac{2d}{2d-n}.$
Therefore
$$\left[\frac{2d_{H}(\cC)}{2d_{H}(\cC)-n} \right]\leq \left[\frac{2d}{2d-n} \right]=K.$$
 By the Plotkin bound for $d_{H},$ we get
$\left[\frac{2d_{H}(\cC)}{2d_{H}(\cC)-n} \right] =K$ and $d_{H}(\cC)=d.$ Thus $\cC$ reaches the Plotkin bound for $\delta_{r}$
if and only if $\cC$ reaches the Plotkin bound for $d_{H}$ and $d=d_{H}(\cC).$ Moreover,
\begin{eqnarray*}
d=d_{H}(\cC) & \Leftrightarrow  & \lceil \frac{2\delta_{r}(\cC)}{r+1}\rceil= d_{H}(\cC) \\
& \Leftrightarrow & \delta_{r}(\cC)> \frac{r+1}{2}(d_{H}(\cC)-1).
\end{eqnarray*}
This gives the first statement of (1). If $r < \frac{d_{H}(\cC)+1}{d_{H}(\cC)-1},$ then
$\frac{r+1}{2}(d_{H}(\cC)-1) < d_{H}(\cC) \leq \delta_{r}(\cC),$ and the last statement of (1) is true.

(2) If $\cC$ reaches the Hamming bound for $\delta_{r}$, then $K=2^n / \sum^{T}\limits_{i=0}\binom{n}{i},$
where $T=\lfloor \frac{\delta_{r}(\cC)}{r+1}\rfloor.$ Namely, $T < \frac{\delta_{r}(\cC)}{r+1} \leq T+1.$
By Lemma \ref{thm-cr} (1), we get $\frac{\delta_{r}(\cC)}{r+1} \leq \frac{d_{H}(\cC)}{2}$
and $T \leq \lfloor \frac{d_{H}(\cC)}{2}\rfloor=[\frac{d_{H}(\cC)-1}{2}].$ From the Hamming bound for $d_{H}$ we know that
$$2^n / \sum^{t}\limits_{i=0}\binom{n}{i} \geq K=2^n / \sum^{T}\limits_{i=0}\binom{n}{i}$$
where $t=[\frac{d_{H}(\cC)-1}{2}] \geq T.$ Therefore $t=T$ and $\cC$ reaches the Hamming bound for $d_{H}.$
It is well-known that for such code, $d_{H}(\cC)=2t+1$ should be odd.
Moreover,
\begin{eqnarray*}
T=t & \Leftrightarrow  & t < \frac{\delta_{r}(\cC)}{r+1} \leq t+1\\
& \Leftrightarrow & t < \frac{\delta_{r}(\cC)}{r+1} \\
 & \Leftrightarrow & \delta_{r}(\cC) >t(r+1).
\end{eqnarray*}
The third statement holds since by Lemma  \ref{thm-cr}  (1),
$\frac{\delta_{r}(\cC)}{r+1} \leq \frac{d_{H}(\cC)}{2}=\frac{2t+1}{2}< t+1.$
This gives the first statement of (2). Moreover, if $r <\frac{t+1}{t},$ then
$\delta_{r}(\cC) \geq d_{H}(\cC)=2t+1 >t(r+1),$ and the last statement of (2) is true.
\end{proof}

Theorem \ref{thm-six} shows that all binary codes reaching the Singleton,
Hamming or Plotkin bound for $\delta_{r}(\cC)$ should be the ones reaching the corresponding bound for $d_{H}.$
The parameters of such binary codes are very limited (see \cite{MS}). The parameters $[n,k,d_{H}]$ of binary codes $\cC$ are

$\ast$ two trivial cases $[n,n,1]$ and $[n,n-1,2] \ (n \geq 2)$ for $\cC$ reaching the Singleton bound;

$\ast$ $[2^m -1, 2^m -m-1,3] \ (m \geq 2, \ \mbox{Hamming codes})$ and $[23,12,7]$ (Golay code) for $\cC$ reaching the Hamming bound.

Any binary code $\cC$ reaching the Plotkin bound should satisfy the strong condition $2 d_{H}(\cC)> n$ and any different codewords
of $\cC$ have the same Hamming distance. The typical examples of such codes are made by using the Hadamard matrices.

\begin{definition}
For $n\geq 2,$ an $n \times n$ matrix $ H=(h_{ij})_{0 \leq i,j \leq n-1}$
is called a Hadamard matrix if $h_{ij} \in \{\pm 1\} \  (0 \leq i,j \leq n-1)$ and
$$HH^T=n I_n \  \ \ (I_n \ \mbox{is the identity matrix of order } n ).$$
\end{definition}
It is well-known that for any Hadamard matrix of order $n \geq 3, n$ should be divided by $4.$
One of the famous conjecture in combinatorial theory is that for all $n \equiv 0 \  (\bmod ~4)$ and $n \geq 4,$
there exists a Hadamard matrix of order $n.$ This conjecture is verified for all $n=4m$ up to a very large number.

Let $ H=(h_{ij})_{0 \leq i,j \leq n-1}$ be a Hadamard matrix of order $n=4m.$ We can assume that $h_{i0}=1 \ (0 \leq i \leq n-1)$
without losing of generality. Namely,

$$H=\left[\begin{array}{cc}
	1 & \\
	\vdots  & H^{\prime}  \\
	1 &
\end{array}
\right],~H^{\prime}=(h_{ij}) \ (0 \leq i \leq n-1, 1 \leq j \leq n-1).$$

Let $h_{ij}=(-1)^{c_{ij}}, c_{ij} \in \F_2=\{0,1\}$ and
$$c_i=(c_{i1},c_{i2},\cdots, c_{i,n-1}) \in \F^{n-1}_2 \ \ (0 \leq i \leq n-1).$$
Consider the binary code $\cC=\{c_0,c_1,\cdots,c_{n-1}\}, K=|\cC|=n.$ From
$nI_n=HH^T,$ we can see that $d_{H}(c_i, c_j)=\frac{n}{2}$ for all $0 \leq i \neq j \leq n-1.$
Therefore the parameters of $\cC$ is $(n-1, K=n, d_{H}(\cC)=\frac{n}{2}).$
From $\frac{2 d_{H}(\cC)}{2d_{H}(\cC)-(n-1)}=n=K,$ we know that $\cC$ reaches the Plotkin bound for $d_{H}.$

We will show other examples in  next  section (see remark of Theorem \ref{thm-caseA}).

Beside the optimal CN codes which have  limited parameters, it is nature to ask
how to construct the CN codes $\cC$ with more flexible parameters $(n,K,\delta_{r}(\cC)),$
and larger size  $K$ and $\delta_{r}(\cC)$. We will consider this problem in next  section.

\section{Construction of CN codes by boolean functions}\label{sec-three}

Inspired by the constructions of error-correcting codes suggested by Ding \cite{Ding1},\cite{Ding2}.
We consider the following construction of CN codes.

Let $m \geq 2$ and $\B_m$ be the set of boolean functions with $m$ variables
$$f(x)=f(x_1, x_2,\cdots, x_m): \F^m_2 \longrightarrow \F_2.$$
$\B_m$ is a commutative ring and
$$\B_m=\F_2[x_1, x_2,\cdots, x_m]/ (x^2_1-x_1, \cdots, x^2_m-x_m),$$
then $|\B_m|=2^{2^m}.$

For each subset $V \subseteq \F^m_2,$ we have an ideal of the ring $\B_m:$
$$I(V)=\{f(x) \in \B_m:f(x)=0 \ \mbox{for all } x \in V \}.$$
Let $|V|=n$ and $V=\{x_1, x_2,\cdots, x_n\} \ (2 \leq n \leq 2^m).$ We have a mapping
$$\varphi_V :\B_m \longrightarrow \F^n_2,$$
$$ \varphi_V(f)=c_f=(f(x_1), f(x_2),\cdots, f(x_n)).$$

Then for each subset $S \subset \B_m,$ we get a binary code in $\F^n_2,$
$$\cC=\cC(V,S)=\varphi_V(S)=\{\varphi_V(f) \in \F^n_2: f \in S\}.$$
We hope that such binary code $\cC=\cC(V,S)$ have nice and flexible parameters
$(n,K, \delta_{r}(\cC))$ by choosing suitable subset $V \subseteq \F^m_2$ and $S \subset \B_m.$
~\\
\begin{theorem}\label{thm-boolean}
For the binary code $\cC=\cC(V,S) \subseteq \F^n_2$ defined above, the length of codewords is $n=|V|$ and
\begin{itemize}
\item[(1)] the size $K=|\cC|$ is $|S|$ if and only if for each pair $(f,g), f,g \in S, f \neq g,$
 we have $ f \not\equiv g \  (\bmod ~I(V))$ (which means that $f-g \not\in I(V)$).

\item[(2)] for $f,g \in S, f \neq g,$
\begin{eqnarray*}\delta_{r}(c_f, c_g)&=&
 \frac{r+1}{4}\left(n-\sum\limits_{x \in V}(-1)^{f(x)+g(x)}\right) \\
 &+& \frac{r-1}{4}\left(\sum\limits_{x \in V}(-1)^{g(x)}-\sum\limits_{x \in V}(-1)^{f(x)}\right).\end{eqnarray*}

\item[(3)] $\delta_{r}(\cC)
=\mbox{min}\{\delta_{r}(c_f, c_g) \mid f,g \in S, f \neq g,
\sum\limits_{x \in V}(-1)^{g(x)} \leq \sum\limits_{x \in V}(-1)^{f(x)}\}.$
\end{itemize}
\end{theorem}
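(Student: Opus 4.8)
The plan is to reduce all three parts to a single bookkeeping device: for $f, g \in S$ I would tabulate the joint behavior of $(f(x), g(x))$ as $x$ ranges over $V$. Concretely, set $n_{ab} = \sharp\{x \in V : f(x) = a, \ g(x) = b\}$ for $a, b \in \F_2$, so that $n_{00} + n_{01} + n_{10} + n_{11} = n$. Every quantity appearing in the statement will be expressed through these four counts.

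For part (1), observe that $c_f = c_g$ in $\F^n_2$ means $f(x) = g(x)$ for every $x \in V$, i.e. $(f-g)(x) = 0$ on $V$, which is exactly $f - g \in I(V)$, i.e. $f \equiv g \pmod{I(V)}$. Thus $\varphi_V$ restricted to $S$ is injective — equivalently $|\cC| = |S|$ — precisely when no two distinct elements of $S$ are congruent modulo $I(V)$. This is immediate from the definition of $I(V)$ and requires no further work.

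Part (2) is the main computation. From the definition of $\delta_r$ I would first read off $d_{10}(c_f, c_g) = n_{10}$ and $d_{01}(c_f, c_g) = n_{01}$, so that $\delta_r(c_f, c_g) = r\,n_{10} + n_{01}$. Next I would express the three character sums in the claimed formula through the same counts: $\sum_{x \in V}(-1)^{f(x)} = n_{00} + n_{01} - n_{10} - n_{11}$, $\sum_{x \in V}(-1)^{g(x)} = n_{00} + n_{10} - n_{01} - n_{11}$, and $\sum_{x \in V}(-1)^{f(x)+g(x)} = n_{00} + n_{11} - n_{01} - n_{10}$. Substituting, the first bracket in the formula becomes $n - (n_{00}+n_{11}-n_{01}-n_{10}) = 2(n_{01}+n_{10})$ and the second becomes $2(n_{10}-n_{01})$; collecting the coefficients $\frac{r+1}{4}$ and $\frac{r-1}{4}$ then yields exactly $r\,n_{10} + n_{01} = \delta_r(c_f, c_g)$. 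This step is purely algebraic, and the only care needed is getting the signs in the three character sums right.

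Part (3) is where the real content lies, and I expect the handling of the asymmetry of $\delta_r$ to be the crux rather than any computation. Swapping $f$ and $g$ in the formula of part (2) leaves the first bracket unchanged (since $f + g = g + f$) and negates the second, giving
\[
\delta_r(c_f, c_g) - \delta_r(c_g, c_f) = \frac{r-1}{2}\left(\sum_{x \in V}(-1)^{g(x)} - \sum_{x \in V}(-1)^{f(x)}\right).
\]
Because $r \geq 1$, the right-hand side is $\leq 0$ exactly when $\sum_{x \in V}(-1)^{g(x)} \leq \sum_{x \in V}(-1)^{f(x)}$; hence for each unordered pair the smaller of the two ordered discrepancies is realized by the ordering in which the first argument has the larger character sum. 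Since $\delta_r(\cC)$ is by definition the minimum over all ordered pairs $c_f \neq c_g$, restricting to pairs with $\sum_{x \in V}(-1)^{g(x)} \leq \sum_{x \in V}(-1)^{f(x)}$ discards only the larger member of each pair and so leaves the minimum unchanged. The one subtlety to flag is the degenerate case $r = 1$ (or equal character sums), where both orderings coincide and either may be kept; this does not affect the value of the minimum.
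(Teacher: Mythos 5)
Your proposal is correct and follows essentially the same route as the paper: part (1) is the same kernel/injectivity observation, part (2) is the same algebraic identity (you verify it via the four counts $n_{ab}$ where the paper expands $d_{10}$ and $d_{01}$ as $\frac{1}{4}\sum_{x}(1-(-1)^{f(x)})(1+(-1)^{g(x)})$, which is the same computation), and part (3) rests on the same comparison of $\delta_r(c_f,c_g)$ with $\delta_r(c_g,c_f)$ showing the minimum is attained on the restricted set of ordered pairs. No gaps; your explicit handling of the degenerate case $r=1$ matches the paper's remark.
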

~\\
\begin{proof}
(1) Consider the mapping
$$\varphi_V:\B_m \rightarrow \F^n_2, \varphi_V(f)=c_f=(f(x))_{x \in V}. $$
This is an $\F_2$-linear mapping and $\mbox{ker}(\varphi_V)=I(V).$ Therefore
\begin{eqnarray*}|\cC|=|S| &\Leftrightarrow & \varphi_V :S \rightarrow \F^n_2 \
\mbox{is injective} \\ &\Leftrightarrow &  \mbox{for} \ f,g \in S, \ f \neq g,\  \mbox{we have} \  c_f \neq c_g.
\end{eqnarray*}
But
$$c_f \neq c_g \Leftrightarrow c_{f-g}\neq 0 \Leftrightarrow f-g \not\in I(V).$$
This completes the proof of (1).

(2) For $f,g \in \B_m,$
\begin{eqnarray*}
&&d_{10}(c_f, c_g) \\
& = & \sharp \{x\in V:f(x)=1, g(x)=0\} \\
&=&\frac{1}{4}\sum\limits_{x \in V}\left(1-(-1)^{f(x)})(1+(-1)^{g(x)}\right)\\
& =&\frac{1}{4}\left(|V|- \sum\limits_{x \in V}(-1)^{f(x)+g(x)}\right) \\
&&+\frac{1}{4}\left(\sum\limits_{x \in V}(-1)^{g(x)}-\sum\limits_{x \in V}(-1)^{f(x)}\right).\end{eqnarray*}
\begin{eqnarray*}
d_{01}(c_f, c_g) &=&  d_{10}(c_g, c_f)\\
&=&\frac{1}{4}\left(|V|- \sum\limits_{x \in V}(-1)^{f(x)+g(x)}\right)\\
&&+\frac{1}{4}\left(\sum\limits_{x \in V}(-1)^{f(x)}-\sum\limits_{x \in V}(-1)^{g(x)}\right).
\end{eqnarray*}
Therefore
\begin{eqnarray*}
\delta_{r}(c_f, c_g)& = & r d_{10}(c_f, c_g)+d_{01}(c_f, c_g) \\
& =&\frac{r+1}{4}\left(n- \sum\limits_{x \in V}(-1)^{f(x)+g(x)}\right)\\
&&+ \frac{r-1}{4}\left(\sum\limits_{x \in V}(-1)^{g(x)}-\sum\limits_{x \in V}(-1)^{f(x)}\right).\\
\end{eqnarray*}

(3) Let $\delta_{r}(c_f, c_g)=r d_{10}(c_f, c_g)+d_{01}(c_f, c_g)=\delta_{r}(\cC),$
then
$$\delta_{r}(\cC) \leq \delta_{r}(c_g, c_f)=r d_{01}(c_f, c_g)+d_{10}(c_f, c_g).$$
Therefore $(r-1)(d_{10}(c_f, c_g)-d_{01}(c_f, c_g)) \leq 0.$ If $r > 1,$ then
$d_{10}(c_f, c_g) \leq d_{01}(c_f, c_g),$ which means that
$\sum\limits_{x \in V}(-1)^{g(x)} \leq \sum\limits_{x \in V}(-1)^{f(x)}$.
If $r=1,$ then $\delta_{r}(c_f, c_g)=\delta_{r}(c_g, c_f)=d_{H}(c_g, c_f)$
and we can choose $f$ and $g$ such that $\sum\limits_{x \in V}(-1)^{g(x)} \leq \sum\limits_{x \in V}(-1)^{f(x)}.$
This completes the proof of (3).
\end{proof}

\vspace{0.3cm}
After the Ding's works (\cite{Ding1},\cite{Ding2}), a huge papers appear to determine the parameters
$(n, K, d_{H}(\cC))$ for binary codes constructed by many types of boolean functions
(see the survey paper \cite{LM}).
In this paper we just consider a simple case by taking bent functions and determine the minimum discrepany $\delta_{\gamma}(\cC)$
for such binary codes $\cC.$ For bent functions, the interested readers may refer to the book \cite{Sen} for more details.

\begin{definition}
For a boolean function $f(x) \in \B_{m} \ (m \geq 2),$ the Walsh transformation of $f(x)$
is the function $$W_{f}(y)=\sum\limits_{x \in \F^m_2} (-1)^{f(x)+x\cdot y},$$
where for $x=(x_1,x_2,\cdots, x_m),y=(y_1,y_2,\cdots, y_m) \in \F^m_2,$
$x\cdot y = \sum^m\limits_{i=1} x_i y_i \in \F_2.$
\end{definition}

$f(x) \in \B_{m}$ is called a bent function if for all
$y \in \F^m_2, W_{f}(y)=2^{\frac{m}{2}}$ or $-2^{\frac{m}{2}}.$
It is well known that for all even $m, m \geq 2,$ there exists bent function $f(x)$ in $\B_{m}$.
We use the following properties on the bent function.

\begin{lemma}\label{thm-bent}
Let $f(x) \in \B_{m}$ be a bent function, $m=2k. $

(1) For each $0 \neq a \in \F^m_2, \sum\limits_{x \in \F^m_2} (-1)^{f(x+a)+f(x)}=0.$

(2) Let $D_f=\{ x \in \F^m_2: f(x)=1\}$ be the support of $f,$ then
$|D_f|=2^{m-1}+\varepsilon 2^{k-1}$ and $W_{f}(0)=\sum\limits_{x \in \F^m_2} (-1)^{f(x)}=- \varepsilon 2^{k},$
where $\varepsilon = \pm 1.$ If $|D_f|=2^{m-1}+ 2^{k-1},$ then $f(x)+1$ is also a bent function and
$|D_{f+1}|=2^{m-1}- 2^{k-1}.$
\end{lemma}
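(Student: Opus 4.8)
The plan is to deduce both parts from the relation between the autocorrelation of $f$ and the square of its Walsh transform, using only the defining flatness $W_f(y)=\pm 2^k$ of a bent function.

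For (1), I would set $\Delta_f(a)=\sum_{x\in\F^m_2}(-1)^{f(x+a)+f(x)}$ and compute $\sum_{y\in\F^m_2}W_f(y)^2(-1)^{a\cdot y}$ in two ways. Expanding $W_f(y)^2=\sum_{x,x'}(-1)^{f(x)+f(x')+(x+x')\cdot y}$ and using the orthogonality relation $\sum_{y\in\F^m_2}(-1)^{u\cdot y}=2^m$ if $u=0$ and $0$ otherwise, the inner sum over $y$ forces $x'=x+a$, giving
\[
\sum_{y\in\F^m_2}W_f(y)^2(-1)^{a\cdot y}=2^m\,\Delta_f(a).
\]
On the other hand, since $f$ is bent, $W_f(y)^2=2^m$ for every $y$, so the left-hand side equals $2^m\sum_{y}(-1)^{a\cdot y}$, which is $2^{2m}$ when $a=0$ and $0$ when $a\neq 0$. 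Comparing the two expressions yields $\Delta_f(a)=0$ for all $a\neq 0$, which is exactly (1).

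For (2), I would first count signs in $W_f(0)=\sum_{x\in\F^m_2}(-1)^{f(x)}$ to get $W_f(0)=(2^m-|D_f|)-|D_f|=2^m-2|D_f|$. Because $f$ is bent with $m=2k$, we have $W_f(0)\in\{2^k,-2^k\}$; writing $W_f(0)=-\varepsilon 2^k$ with $\varepsilon=\pm 1$ and solving gives $|D_f|=2^{m-1}+\varepsilon 2^{k-1}$, the claimed support size. For the last assertion I would note that $W_{f+1}(y)=-W_f(y)$, so $f+1$ inherits the flat spectrum and is again bent, while $D_{f+1}=\{x:f(x)=0\}$ has size $2^m-|D_f|$; in the case $|D_f|=2^{m-1}+2^{k-1}$ this equals $2^{m-1}-2^{k-1}$.

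The only genuinely nontrivial step is the autocorrelation--Walsh identity in (1); everything else is the orthogonality relation over $\F^m_2$, a direct $\pm 1$ count, and the trivial observation $W_{f+1}=-W_f$. I expect the main care to be in the double-sum manipulation, specifically in recording that summing $(-1)^{(x+x'+a)\cdot y}$ over $y$ collapses the double sum onto the diagonal $x'=x+a$, since that is the one place where the bent hypothesis $W_f(y)^2=2^m$ actually enters.
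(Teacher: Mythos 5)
Your proof is correct. Note that the paper itself gives no proof of this lemma at all --- it is stated as a well-known property of bent functions (the reader is referred to the literature, e.g.\ \cite{Sen}), so there is no ``paper's approach'' to compare against. Your argument is the standard one: the Fourier-inversion computation $\sum_{y}W_f(y)^2(-1)^{a\cdot y}=2^m\Delta_f(a)$ combined with the flatness $W_f(y)^2=2^m$ correctly yields $\Delta_f(a)=0$ for $a\neq 0$; the sign count $W_f(0)=2^m-2|D_f|$ together with $W_f(0)=\pm 2^k$ gives the support size; and $W_{f+1}=-W_f$ handles the final claim. All steps check out, including the collapse of the double sum onto $x'=x+a$ via orthogonality of characters, which is indeed the only place any real content enters.
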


Now we present three constructions of CN codes by boolean functions.
\vspace{0.15cm}
\textbf{Construction A}

Let $f(x) \in \B_{m}$ be a bent function, $m=2k \neq 4.$ For $a \in \F^m_2, b\in \F_2,$
let $f_{a,b}(x)=f(x+a)+b \in \B_{m}$ (which is also a bent function). Consider the following binary code
$$\cC=\cC_f=\{c(a,b)=(f_{a,b}(x))_{x  \in \F^m_2} \mid  a \in \F^m_2, b\in \F_2\} \subseteq \F^n_2, $$
where $ n=2^m.$
~\\
\begin{theorem}\label{thm-caseA}
The code $\cC=\cC_f$ defined above has parameters
$(n=2^m, K=2^{m+1},\delta_{r}(\cC)=\frac{r+1}{4}2^m - \frac{r-1}{4}2^{k+1}).$
\end{theorem}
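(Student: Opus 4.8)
The plan is to read off both the size $K$ and the minimum discrepancy directly from Theorem \ref{thm-boolean}, feeding in the two defining features of a bent function recorded in Lemma \ref{thm-bent}. Since here $V=\F_2^m$ is the whole space, the ideal $I(V)$ is trivial, so by Theorem \ref{thm-boolean}(1) the size equals $|S|=2^{m+1}$ exactly when the parametrization $(a,b)\mapsto f_{a,b}$ is injective as a map into $\B_m$. So I would first establish injectivity: if $f(x+a)+b=f(x+a')+b'$ for all $x$, then setting $c=a+a'$ and $\epsilon=b+b'$ gives the identity $f(y+c)+f(y)=\epsilon$. For $c\neq 0$ this would force $\sum_x(-1)^{f(x+c)+f(x)}=\pm 2^m\neq 0$, contradicting Lemma \ref{thm-bent}(1); hence $c=0$, and then $\epsilon=0$, i.e. $(a,b)=(a',b')$. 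This yields $K=2^{m+1}$.

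For the discrepancy I would apply the formula of Theorem \ref{thm-boolean}(2) to a pair $g=f_{a,b}$, $h=f_{a',b'}$ and evaluate the two relevant character sums. Writing $W=W_f(0)=\sum_x(-1)^{f(x)}$, a shift of variable gives $\sum_x(-1)^{g(x)}=(-1)^bW$ and $\sum_x(-1)^{h(x)}=(-1)^{b'}W$, where $|W|=2^k$ by Lemma \ref{thm-bent}(2). For the cross term, substituting $y=x+a'$ and setting $c=a+a'$ turns $\sum_x(-1)^{g(x)+h(x)}$ into $(-1)^{b+b'}\sum_y(-1)^{f(y+c)+f(y)}$, which equals $(-1)^{b+b'}2^m$ when $c=0$ and, crucially, vanishes when $c\neq 0$, again by Lemma \ref{thm-bent}(1).

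Next I would substitute these values into Theorem \ref{thm-boolean}(2),(3) and minimize over the ordered pairs, i.e. those with $\sum_x(-1)^{h(x)}\leq\sum_x(-1)^{g(x)}$, equivalently $(-1)^{b'}W\leq(-1)^bW$. Because $r\geq 1$, the second summand is nonpositive under this ordering, so I would split into three cases: $a=a'$ (which forces $b\neq b'$); $a\neq a'$ with $b=b'$; and $a\neq a'$ with $b\neq b'$. Using the ordering to replace $(-1)^bW$ by its absolute value $2^k$ when $b\neq b'$, the three values come out to $\frac{r+1}{2}2^m-\frac{r-1}{4}2^{k+1}$, $\frac{r+1}{4}2^m$, and $\frac{r+1}{4}2^m-\frac{r-1}{4}2^{k+1}$ respectively. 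A direct comparison shows the third is the smallest for every $r\geq 1$, and it is attained because $m\geq 2$ lets us choose $a\neq a'$ and $b\neq b'$ simultaneously; this is exactly the asserted $\delta_{r}(\cC)=\frac{r+1}{4}2^m-\frac{r-1}{4}2^{k+1}$.

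The computations are routine once the two sums are in hand; the only delicate point, and the place I expect to have to be careful, is bookkeeping the ordering convention from Theorem \ref{thm-boolean}(3), so that $(-1)^bW$ is consistently taken to be $+2^k$, and then checking that the ``$a\neq a'$, $b\neq b'$'' case strictly beats the other two rather than merely tying them. I expect no role for the hypothesis $m\neq 4$ in the parameter computation itself, which is an even-$m\geq 2$ argument throughout; that restriction should only enter the comparison with known optimal codes in the accompanying remark.
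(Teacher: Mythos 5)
Your proposal is correct and follows essentially the same route as the paper: apply the formula of Theorem \ref{thm-boolean}(2)--(3) to the pair $f_{a,b},f_{a',b'}$, evaluate the cross term via Lemma \ref{thm-bent}(1) (it is $(-1)^{b+b'}2^m$ if $a=a'$ and $0$ otherwise) and the linear terms via $W_f(0)=\pm 2^k$, then minimize over ordered pairs with nonpositive second summand. The only cosmetic differences are that you certify $K=2^{m+1}$ by injectivity of $(a,b)\mapsto f_{a,b}$ rather than by noting $d_H(\cC)=2^{m-1}>0$ as the paper does, and you split the $a\neq a'$ case into two subcases (which at $r=1$ tie rather than one strictly beating the other, but the minimum value is unaffected).
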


\begin{proof}
It is obvious that $n=2^m$ and $K \leq \sharp \{(a,b): a \in \F^m_2, b\in \F_2\}=2^{m+1}.$
Now we compute $\delta_{r}(\cC)$ by Theorem \ref{thm-boolean} and Lemma \ref{thm-bent}.
For codewords $c(a,b)$ and $c(a^{\prime},b^{\prime}), \ c(a,b) \neq c(a^{\prime},b^{\prime}),$
\begin{eqnarray*}
&& \delta_{r}(c(a,b), c(a^{\prime},b^{\prime}))\\
&=&\frac{r+1}{4}\left(n-\sum\limits_{x \in \F^m_2} (-1)^{f_{a,b}(x)+f_{a^{\prime},b^{\prime}}(x)}\right)\\
&&+\frac{r-1}{4}\left(\sum\limits_{x \in \F^m_2}(-1)^{f_{a^{\prime},b^{\prime}}(x)}-\sum\limits_{x \in \F^m_2}(-1)^{f_{a,b}(x)}\right)\\
&=&\frac{r+1}{4}\left(2^m-\sum\limits_{x \in \F^m_2} (-1)^{f(x+a)+f(x+a^{\prime})+b+b^{\prime}}\right)\\
&&+\frac{r-1}{4}\left(\sum\limits_{x \in \F^m_2}(-1)^{f(x+a^{\prime})+b^{\prime}}-\sum\limits_{x \in \F^m_2}(-1)^{f(x+a)+b}\right)\\
&=& \frac{r+1}{4}\left(2^m-(-1)^{b+b^{\prime}}\sum\limits_{x \in \F^m_2}(-1)^{f(x+a)+f(x+a^{\prime})}\right)\\
&&+\frac{r-1}{4}\left( ((-1)^{b^{\prime}}-(-1)^{b})\sum\limits_{x \in \F^m_2}(-1)^{f(x)}  \right)\\
&=&\frac{r+1}{4}\left(2^m-(-1)^{b+b^{\prime}} 2^m \delta_{a,a^{\prime}}\right)\\
&&+\frac{r-1}{4}\left(((-1)^{b^{\prime}}-(-1)^b)(-\varepsilon 2^k) \right)
\end{eqnarray*}
where $\delta_{a,a^{\prime}}=1$ if $a=a^{\prime}$ and $\delta_{a,a^{\prime}}=0$ otherwise. Therefore for
$(a,b) \neq (a^{\prime},b^{\prime}),$
\begin{eqnarray*}
&& \delta_{r}(c(a,b), c(a^{\prime},b^{\prime}))\\
&= & \left \{
\begin{array}{ll}
\frac{r+1}{4}2^m+\frac{r-1}{4} ((-1)^b -(-1)^{b^{\prime}})\varepsilon 2^k, & \mbox{if} \ a \neq a^{\prime}, \\
\frac{r+1}{4}2^{m+1}+\frac{r-1}{4}(-1)^b \varepsilon 2^{k+1}, & \mbox{if} \scriptstyle  a = a^{\prime} \mbox{and}
\atop \scriptstyle  b + b^{\prime}=1.
\end{array}
\right.
\end{eqnarray*}
By Theorem \ref{thm-boolean} (3), the coefficient of $\frac{r-1}{4}$ in the right-hand side should not be positive.
Then we can  see that the minimal value of the right-hand side is
$$\delta_{r}(\cC)=\frac{r+1}{4}2^m-\frac{r-1}{4}2^{k+1}, $$
and $d_{H}(\cC)=\delta_{1}(\cC)=2^{m-1},\delta_{r}(\cC)-d_{H}(\cC)=\frac{r-1}{4}(2^m -2^{k+1}) > 0$ for $r >1.$
Moreover, for $(a,b) \neq (a^{\prime},b^{\prime}), d_{H}(c(a,b),c(a^{\prime},b^{\prime})) \geq 2^{m-1} >0.$ We get $K=|\cC|=2^{m+1}.$
\end{proof}

\begin{remark}
The parameters of $\cC$ is $(n=2^m, K=2^{m+1}, d_H(\cC)=2^{m-1}).$ For $\alpha \in \F_2,$ let
$$\cC^{(\alpha)}=\{(c_1,c_2,\cdots,c_{n-1}) | (\alpha,c_1,c_2,\cdots,c_{n-1}) \in \cC \}.$$
If $c(a,b) =(c_0,c_1, \cdots, c_{n-1}),$ then $c(a,b+1) =(c_0 +1,c_1 +1, \cdots, c_{n-1} +1).$ Therefore $|\cC^{(0)}|=|\cC^{(1)}|=2^m.$
Namely, the parameters of $\cC^{(\alpha)}$ is $(n=2^m -1, K=2^{m}, d_H=d_H(\cC)=2^{m-1})$ and
$\delta_{r}(\cC^{(\alpha)})=\delta_{r}(\cC)$ for $\alpha=0$ and $1.$ By $\frac{2 d_H}{d_H-n}=2^m=K,$
we know that both of $\cC^{(0)}$ and $\cC^{(1)}$ reaches the Plotkin bound for $d_H.$ By Theorem \ref{thm-six} (1),
$\cC^{(0)}$ and $\cC^{(1)}$ reaches the Plotkin bound for $\delta_{r}$ if and only if

\begin{eqnarray*}
\frac{r +1}{4} 2^m -\frac{r -1}{4} 2^{k+1}
&=& \delta_{r}(\cC^{(\alpha)}) > \frac{r +1}{2}(d_H(\cC^{(\alpha)})-1)\\
&=&\frac{r +1}{2}(2^{m-1}-1),\end{eqnarray*}
which means that $r < \frac{2^k +1}{2^k -1}.$
\end{remark}
\vspace{0.15cm}
\textbf{Construction B}

Let $f(x) \in \B_{m}$ be a bent function, $m=2k \neq 4, D_f=\{x \in \F^m_2 | f(x)=1\}$
be the support of $f, n=|D_f|=2^{m-1}+\varepsilon 2^{k-1},  \varepsilon  \in \{ \pm 1\}.$
For $a \in \F^m_2, b\in \F_2,$ we get $c(a,b)=(ax+b)_{x \in D_f} \in \F^n_2.$ Then
$$\cC=\cC(f)=\{c(a,b) \mid a \in \F^m_2, b\in \F_2 \} \subseteq \F^n_2$$
is a linear code and $\delta_{r}(\cC)=d_H(\cC)$ for all $r \geq 1$ (Lemma \ref{thm-cr} (2)).
The following result shows that if we omit the zero codewords $c(0,0)=0$ from $\cC,$ then for $ \cC^{\prime}=\cC \backslash \{0\},$
the value of $\delta_{r}(\cC^{\prime})$ can be larger than $d_H(\cC^{\prime})$ when $r >1.$
~\\
\begin{theorem}\label{thm-caseB}
Let $f(x) \in \B_{m}$ be a bent function, $m=2k \geq 4, |D_f|=2^{m-1}+\varepsilon 2^{k-1}, \varepsilon \in \{\pm 1 \}$
and

$\cC^{\prime}=\cC^{\prime}(f)$
$=\{c(a,b)=(a \cdot x +b)_{x \in D_f} | a \in \F^m_2, b\in \F_2, (a,b) \neq (0,0)\}.$

Then the parameters of $\cC^{\prime}$ is
$(n=|D_f|, K=2^{m+1}-1, \delta_{r}(\cC^{\prime})=\frac{r+1}{4}(n-2^{k-1})-\frac{r -1}{4} (2^{m-1}-2^k)),$
$d_H(\cC^{\prime})=\frac{1}{2}(n-2^{k-1})$ and for $r=1, \delta_{r}(\cC^{\prime}) > d_H(\cC^{\prime}).$
\end{theorem}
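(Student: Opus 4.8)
The plan is to read off the length and size directly, reduce every pairwise discrepancy to a single exponential sum that bentness evaluates explicitly, and then run the minimization.

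First I would settle $n$ and $K$. The length is $n=|D_f|$ by definition. For the size, the assignment $(a,b)\mapsto c(a,b)$ is $\F_2$-linear, so by Theorem \ref{thm-boolean}(1) it is injective exactly when no nonzero affine form $\alpha\cdot x+\gamma$ vanishes identically on $D_f$. If $\alpha\ne 0$ this will follow because $\sum_{x\in D_f}(-1)^{\alpha\cdot x}$ turns out to equal $\pm 2^{k-1}\ne\pm n$, so $\alpha\cdot x$ is non-constant on $D_f$; if $\alpha=0$ the form is a nonzero constant, hence not identically $0$ on $D_f$. Discarding only the zero codeword $c(0,0)$ then leaves $K=2^{m+1}-1$ distinct codewords.

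Next I would compute the pairwise discrepancies. For $g=a\cdot x+b$ and $g'=a'\cdot x+b'$, Theorem \ref{thm-boolean}(2) expresses $\delta_{r}(c(a,b),c(a',b'))$ through the three sums $\sum_{x\in D_f}(-1)^{g(x)+g'(x)}$, $\sum_{x\in D_f}(-1)^{g(x)}$ and $\sum_{x\in D_f}(-1)^{g'(x)}$, each of the shape $\sum_{x\in D_f}(-1)^{\alpha\cdot x+\gamma}$. The key identity, obtained by writing $\mathbf 1_{D_f}(x)=\tfrac12\bigl(1-(-1)^{f(x)}\bigr)$, is
\[
\sum_{x\in D_f}(-1)^{\alpha\cdot x}=\tfrac12\Bigl(2^{m}\,\delta_{\alpha,0}-W_f(\alpha)\Bigr).
\]
Bentness gives $W_f(\alpha)=\pm 2^{k}$ for every $\alpha$, with $W_f(0)=-\varepsilon 2^{k}$ by Lemma \ref{thm-bent}(2); hence the sum equals $\mp 2^{k-1}$ when $\alpha\ne 0$ and equals $|D_f|=n$ when $\alpha=0$.

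Finally I would carry out the case analysis and minimize, which is where the real work lies. The pairs split into $a=a'$ (then $g+g'$ is the constant $b+b'$, so the codewords are equal or complementary, giving $d_{H}=n$) and $a\ne a'$ (then $d_{H}\in\{\tfrac12(n\pm 2^{k-1})\}$). Substituting the evaluated sums into the oriented form of Theorem \ref{thm-boolean}(3), every discrepancy takes the shape $\tfrac{r+1}{2}d_{H}-\tfrac{r-1}{4}\,\lvert S_g-S_{g'}\rvert$, and $\delta_{r}(\cC')$ is its minimum over admissible pairs. The main obstacle is deciding which configuration is genuinely optimal: one must weigh the smallest available Hamming distance $\tfrac12(n-2^{k-1})$ against the largest achievable asymmetric gap $\lvert S_g-S_{g'}\rvert$, and in particular control the pairs that involve the constant codeword $c(0,1)$, for which one of $d_{10},d_{01}$ vanishes and the discrepancy degenerates to the single count $\#\{x\in D_f:\alpha\cdot x=\gamma\}=\tfrac12(n\mp 2^{k-1})$. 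Pinning down the minimizing pair among these competing candidates — not the exponential-sum evaluation, which is routine once the identity above is in hand — is the crux, and it is exactly this comparison that simultaneously yields $d_{H}(\cC')$ as the $r=1$ specialization and the sign of $\delta_{r}(\cC')-d_{H}(\cC')$ for $r>1$.
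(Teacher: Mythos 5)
Your skeleton is the same as the paper's: length and size read off directly, Theorem \ref{thm-boolean} plus the identity $\sum_{x\in D_f}(-1)^{\alpha\cdot x}=\tfrac12(2^m\delta_{\alpha,0}-W_f(\alpha))$ to evaluate every pairwise discrepancy, then a case analysis. The parts you carry out ($n$, $K$, the exponential-sum evaluation) are correct. But the proposal stops exactly at the step you yourself call the crux: you never actually perform the minimization over the competing configurations, so the claimed value of $\delta_r(\cC')$ is not established. As written this is a description of a plan, not a proof.

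The more serious point is that the candidate you correctly isolate --- the pairs involving the constant codeword $c(0,1)$ --- does not merely ``compete''; it wins, and it contradicts the stated formula. For any $a\neq 0$ choose $b$ with $(-1)^bW_f(a)=2^k$; then $c(a,b)$ has weight $\tfrac12(n+2^{k-1})$, and since $c(0,1)$ is the all-ones word, $d_{10}(c(a,b),c(0,1))=0$ and $\delta_r(c(a,b),c(0,1))=d_{01}=n-\mathrm{wt}(c(a,b))=\tfrac12(n-2^{k-1})=d_H(\cC')$ for \emph{every} $r$. Combined with Lemma \ref{thm-cr}(1) this forces $\delta_r(\cC')=d_H(\cC')$ for all $r\geq 1$, whereas the theorem's formula exceeds $d_H(\cC')$ by $\tfrac{r-1}{4}(\varepsilon+1)2^{k-1}>0$ when $\varepsilon=1$ and $r>1$. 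The paper's own proof makes the corresponding error in the subcase $a\neq a'$ with $a'=0$: the quantity $A$ is obtained by minimizing the two summands of the discrepancy formula independently, but the sign choices required are incompatible (minimizing the second summand needs $\sum_{x\in D_f}(-1)^{a\cdot x+b}=+2^{k-1}$, minimizing the first needs $\sum_{x\in D_f}(-1)^{a\cdot x+b+1}=+2^{k-1}$), so $A$ is attained by no actual pair; the true minimum over that subcase is the $r$-independent value $\tfrac12(n-2^{k-1})$. So if you complete your plan honestly you will not prove the theorem --- you will refute its $\delta_r$ claim; the correct conclusion of your approach is $\delta_r(\cC')=d_H(\cC')=\tfrac12(n-2^{k-1})$.
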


\begin{proof}
By Theorem \ref{thm-boolean}, we have
\begin{eqnarray}\label{eqn-prime}
&& \delta_{r}(c(a,b), c(a^{\prime},b^{\prime}))\nonumber \\
&=&\frac{r+1}{4}\left(n-\sum\limits_{x \in D_f} (-1)^{a \cdot x +b +a^{\prime} \cdot x +b^{\prime}}\right)\nonumber \\
 &&+\frac{r-1}{4}\left(\sum\limits_{x \in D_f}(-1)^{a^{\prime} \cdot x +b^{\prime}}-\sum\limits_{x \in D_f}(-1)^{a \cdot x +b} \right)\nonumber \\
&=& \frac{r+1}{4} \left(n-(-1)^{b+b^{\prime}}\sum\limits_{x \in D_f}(-1)^{(a+a^{\prime})x} \right)\\
&+& \frac{r-1}{4} \left((-1)^{b^{\prime}}\sum\limits_{x \in D_f}(-1)^{a^{\prime}\cdot x}-(-1)^b \sum\limits_{x \in D_f}(-1)^{a \cdot x} \right)\nonumber.
\end{eqnarray}

For any $a \in \F^m_2, $
\begin{eqnarray*}&&\sum\limits_{x \in D_f}(-1)^{a \cdot x} \\&=&
\frac{1}{2}\sum\limits_{x \in \F^m_2}\left(1-(-1)^{f(x)}\right)(-1)^{a \cdot x}\\
&=&\frac{1}{2}\left(\sum\limits_{x \in \F^m_2}(-1)^{a \cdot x}-W_f(a)\right)\\
&=& \frac{1}{2}\left(2^m \delta_{a} -W_f(a)\right),
\end{eqnarray*}
where $\delta_{a}=1$ for $a=0$ and $\delta_{a}=0$ otherwise. Therefore
\begin{eqnarray*}
&&\delta_{r}(c(a,b), c(a^{\prime},b^{\prime}))\\
&=& \frac{r+1}{4} \left(n-(-1)^{b+b^{\prime}} \frac{1}{2} (2^m \delta_{a+a^{\prime}}-W_f(a+a^{\prime}))\right) \\
&+& \frac{r-1}{4}\left(\frac{1}{2} (-1)^{b^{\prime}}(2^m \delta_{a^{\prime}}-W_f(a^{\prime}))\right)\\
&+& \frac{r-1}{4}\left(-\frac{1}{2} (-1)^{b}(2^m \delta_{a}-W_f(a))\right).
\end{eqnarray*}
If $a \neq a^{\prime},$ then
\begin{eqnarray*}
&&\delta_{r}(c(a,b), c(a^{\prime},b^{\prime}))\\
&=& \frac{r+1}{4} \left(n+(-1)^{b+b^{\prime}} \frac{1}{2} W_f(a+a^{\prime})\right) \\
&& + \frac{r-1}{4}\left( \frac{1}{2} (-1)^{b^{\prime}}(2^m \delta_{a^{\prime}}-W_f(a^{\prime}))\right)\\
&&-\frac{r-1}{4}\left(\frac{1}{2} (-1)^{b}(2^m \delta_{a}-W_f(a))\right).
\end{eqnarray*}
The minimum value of the right-hand side is
$$A=\frac{r+1}{4} (2^{m-1}+\varepsilon 2^{k-1}-2^{k-1})+\frac{r-1}{4}(-2^{m-1}-2^k).$$
If $a = a^{\prime},$ then $b \neq b^{\prime},$ namely $(b, b^{\prime})=(0,1)$ or $(1,0).$
Then by assumption $(a,b), (a^{\prime},b^{\prime}) \neq (0,0),$ we have $a =a^{\prime} \neq 0. $ In this case, by (\ref{eqn-prime})
\begin{eqnarray*}
&&\delta_{r}(c(a,b), c(a^{\prime},b^{\prime})) \\&=& \frac{r+1}{4} (n+|D_f|)
+\frac{r-1}{4}(2 \cdot (-1)^{b^{\prime}} W_f(a)) \\
&=& \frac{r+1}{4} (2^m +\varepsilon 2^k)  +\frac{r-1}{4}(2 \cdot (-1)^{b^{\prime}} W_f(a)).
\end{eqnarray*}
The minimal value of the right-hand side is
$$B=\frac{r+1}{4}(2^m +\varepsilon 2^k)-\frac{r-1}{4} 2^{k+1}.$$ Since
$$B-A=\frac{r+1}{4}(2^{m-1}+(\varepsilon+1)2^{k-1})+\frac{r-1}{4}(2^{m-1}-2^k) >0.$$
We get $$\delta_{r}(\cC^{\prime})=A=\frac{r+1}{4}(2^{m-1}+(\varepsilon-1)2^{k-1})-\frac{r-1}{4}(2^{m-1}+2^k).$$
Moreover,$$d_H(\cC^{\prime})=\delta_{1}(\cC^{\prime})=\frac{1}{2}(2^{m-1}+(\varepsilon-1)2^{k-1})=\frac{1}{2}(n-2^{k-1})$$
and $\delta_{r}(\cC^{\prime})-d_{H}(\cC^{\prime})=\frac{r-1}{4}(\varepsilon+1)2^{k-1}.$
Namely, if $r>1$ and $\varepsilon=1,$ then $\delta_{r}(\cC^{\prime})> d_{H}(\cC^{\prime}).$
For $(a,b) \neq (a^{\prime},b^{\prime}),$
\begin{eqnarray*}
d_H(c(a,b),c(a^{\prime},b^{\prime})) & \geq & d_H(\cC^{\prime})
=\frac{1}{2}(2^{m-1}+(\varepsilon-1)2^{k-1}) \\
& \geq & \frac{1}{2}(2^{m-1}-2^{k}) >0
\end{eqnarray*}
which implies that $c(a,b) \neq c(a^{\prime},b^{\prime}).$ Therefore $K=|\cC^{\prime}|=2^{m+1}-1.$
This completes the proof of Theorem \ref{thm-caseB}.
\end{proof}
\vspace{0.15cm}
\textbf{Construction C}

Let $m=2k \geq 4.$ A subset $S$ of $\B_m$ is called bent set if for any distinct boolean functions
$f$ and $g$ in $S,$ $f+g$ is bent. It is proved in \cite{BM} that the maximal size $\mid S \mid$
of a bent set $S$ in $\B_m$ is $2^{m-1}.$ Such maximal bent set in $\B_m$ has been presented in \cite{BT}
for all even $m \geq 4$ as following.

We view $\F^{m-1}_2$ as the finite field $\F_{2^{m-1}},$ then $\F^{m}_2$ can be viewed as
$\F^{m}_2=\F_{2^{m-1}} \times \F_2$ with inner product
$$((x,x_m),(y,y_m))=\tr(xy)+x_m y_m \ \ x,y \in \F_{2^{m-1}}, x_m, y_m \in \F_2,$$
where $\tr:\F_{2^{m-1}} \longrightarrow \F_2$ is the trace mapping $\tr(x)=\sum^{m-2}_{\lambda=0}x^{2^{\lambda}}.$

Consider the following (quadratic) boolean function in $\B_m,$
$$f(x,x_m):\F_{2^{m-1}} \times  \F_2 \longrightarrow \F_2,$$
$$f(x,x_m)=\tr(\sum^{k}_{j=1}x^{2^j +1})+x_m \tr(x) \ \ (k=\frac{m}{2}).$$

For all $u \in \F_{2^{m-1}},$ let $f_u(x,x_m)=f(ux,x_m).$ It is known in \cite{BT} that
$$K_m=\{f_u(x,x_m) | u \in F_{2^{m-1}} \}, | K_m|=2^{m-1}$$
is a bent set in $\B_m,$ called Kerdock bent set since it is closely related to the Kerdock codes.
From $f_0(x,x_m)\equiv 0,$ we know that for each $u \in \F^{\ast}_{2^{m-1}},f_u(x,x_m)=f_u(x,x_m)+f_0(x,x_m)$
is also bent.

Let $L_m$ be the set of linear functions in $\F_{2^{m-1}} \times  \F_2.$ Namely,
$$L_m=\{ l_{a,a_m}(x,x_m)=\tr(ax)+a_m x_m | (a,a_m) \in \F_{2^{m-1}} \times  \F_2\},$$
$|L_m|=2^m.$

We get new larger subset of $\B_m,$
$$ S_m=\{f_u(x,x_m)+l_{a,a_m}(x,x_m) | (u,a,a_m) \in F^2_{2^{m-1}} \times \F_2\},$$
$|S_m|=2^{2m-1}.$ Then we get the following binary codes with length $n=2^m,$
$$\cC_m=\{c(u,a,a_m)=(f_u(x,x_m)+l_{a,a_m}(x,x_m))\}, $$
where $(u,a,a_m)$ run through $F^2_{2^{m-1}} \times \F_2.$

It is easy to see that
$$c(u+u^{\prime},a+a^{\prime},a_m+a_m^{\prime})=c(u,a,a_m)+c(u^{\prime},a^{\prime},a_m^{\prime}).$$
Therefore $\cC_m$ is a linear code and $\delta_r(\cC_m)=d_H(\cC_m).$ As construction B, we take
$\cC^{\ast}_m=\cC_m \backslash \{(0,0,0)\},$ the following computation shows that $\delta_r(\cC^{\ast}_m)$
is bigger than $d_H(\cC^{\ast}_m)$ if $r >1.$
\vspace{0.15cm}

\begin{theorem}\label{thm-caseC}
Let $m=2k \geq 4, n=2^m$ and
$\cC^{\ast}_m=\{c(u,a,a_m) \in \F^n_2 | (0,0,0)\neq (u,a,a_m)\in F_{2^{m-1}} \times F_{2^{m-1}} \times \F_2\}.$
Then the parameters of binary code $\cC^{\ast}_m$ is
$(n=2^m, K=|\cC^{\ast}_m|=2^{2m-1}-1,\delta_r),$ where
$\delta_r=\delta_r(\cC^{\ast}) \geq (r+1)2^{m-2}-(3r-1)2^{k-2}.$ Particularly,
$d_H(\cC^{\ast})=\delta_1(\cC^{\ast})=2^{m-1}-2^k$ and
$\delta_r(\cC^{\ast})-d_H(\cC^{\ast})=(r-1)2^{k-1}(2^{k-1}-3) >0$ when $k \geq 3$ and $r >1.$
\end{theorem}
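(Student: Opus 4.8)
The plan is to follow the template of Constructions A and B (Theorems \ref{thm-caseA} and \ref{thm-caseB}), turning the whole computation into a bookkeeping of Walsh values of the Kerdock functions. First I would settle the size. Since $c(u+u',a+a',a_m+a_m')=c(u,a,a_m)+c(u',a',a_m')$, the code $\cC_m$ is $\F_2$-linear, so $|\cC_m|=2^{2m-1}$ as soon as the parametrisation is injective, i.e.\ as soon as every nonzero triple gives a nonzero codeword. Writing the weight of $c(u,a,a_m)$ as $\frac12\big(2^m-\sum_{(x,x_m)\in\F^m_2}(-1)^{f_u(x,x_m)+l_{a,a_m}(x,x_m)}\big)$ and noting that $l_{a,a_m}(x,x_m)=\big((x,x_m),(a,a_m)\big)$ is an inner product, the inner sum is exactly $W_{f_u}(a,a_m)$; this equals $\pm 2^k$ when $u\neq 0$ (as $f_u$ is bent) and $0$ when $u=0,(a,a_m)\neq 0$, so every nonzero codeword has positive weight. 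Hence the map is injective, $|\cC_m|=2^{2m-1}$ and $|\cC^{\ast}_m|=2^{2m-1}-1$.

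Next I would compute the discrepancy between two codewords indexed by $(u,a,a_m)$ and $(u',a',a_m')$ via Theorem \ref{thm-boolean}(2), taking $V=\F^m_2$, $n=2^m$, $F=f_u+l_{a,a_m}$ and $G=f_{u'}+l_{a',a_m'}$. Exactly as above, the three sums appearing there are Walsh values: $\sum_X(-1)^{F+G}=W_{f_u+f_{u'}}(a+a',a_m+a_m')$, $\sum_X(-1)^{F}=W_{f_u}(a,a_m)$ and $\sum_X(-1)^{G}=W_{f_{u'}}(a',a_m')$. The bent-set property of $K_m$ is what makes this tractable: for $u\neq u'$ the function $f_u+f_{u'}$ is bent, so the first Walsh value is $\pm 2^k$, while for $u=u'$ (distinct codewords) it is $0$; the two single sums lie in $\{0,\pm 2^k\}$, the value $2^m$ being excluded precisely because the triple $(0,0,0)$ has been removed. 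This reduces everything to a finite case analysis in $u=u'$ versus $u\neq u'$ and in whether $u,u'$ vanish, entirely parallel to the case split in the proof of Theorem \ref{thm-caseB}.

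With these values in hand, I would substitute into the formula of Theorem \ref{thm-boolean}(2) and, invoking Theorem \ref{thm-boolean}(3), assume the pair is ordered so that $\sum_X(-1)^G\le\sum_X(-1)^F$, making the coefficient of $\frac{r-1}{4}$ nonpositive. Bounding the two contributions independently — the first Walsh value by $2^k$, and the difference $\sum_X(-1)^G-\sum_X(-1)^F$ from below by $-2^{k+1}$ (its extreme, reached when one single sum is $2^k$ and the other $-2^k$) — gives $\delta_r(\cC^{\ast}_m)\ge\frac{r+1}{4}(2^m-2^k)-\frac{r-1}{4}2^{k+1}=(r+1)2^{m-2}-(3r-1)2^{k-2}$. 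Specialising to $r=1$ removes the asymmetric term, so $d_H=\delta_1$ is driven solely by maximising the first Walsh value; since the Walsh spectrum of the bent function $f_u+f_{u'}$ attains its maximum $+2^k$, the minimum Hamming distance is attained, and a direct subtraction then produces the stated expression for $\delta_r-d_H$ together with its positivity for the relevant range of $k$ and $r>1$.

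The hard part will be the interaction of the two estimates used in the third step. The lower bound on $\delta_r$ combines the extreme $W_{f_u+f_{u'}}(\cdot)=2^k$ with the extreme difference $-2^{k+1}$ of the single sums, and there is no a priori reason a single pair of codewords realises both extremes at once: this would require the three Walsh values $W_{f_u}(a,a_m)$, $W_{f_{u'}}(a',a_m')$ and $W_{f_u+f_{u'}}(a+a',a_m+a_m')$ to take a prescribed joint sign pattern, which depends on fine arithmetic of the Kerdock family. This is exactly why the estimate is naturally stated as an inequality for $\delta_r$; the one place where exact attainment genuinely has to be argued is the $d_H$ step, where it follows cleanly from the sign balance of a bent function's Walsh spectrum.
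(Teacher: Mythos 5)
Your proposal follows essentially the same route as the paper's own proof: express $\delta_r(c,c')$ via Theorem \ref{thm-boolean}(2) as a combination of three character sums, identify them as Walsh values, use the bent-set property of the Kerdock family to pin each to $\{0,\pm 2^k\}$ (with $2^m$ excluded by removing the zero triple), and split into cases according to whether $u=u'$ and whether $u,u'$ vanish, arriving at the same lower bound $(r+1)2^{m-2}-(3r-1)2^{k-2}$. Your closing observation that the two extremes need not be attained by a single pair is precisely why the paper, too, states the result only as the inequality $\delta_r(\cC^{\ast}_m)\geq (r+1)2^{m-2}-(3r-1)2^{k-2}$.
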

\vspace{0.15cm}
\begin{proof}
For $c=c(u,a,a_m),c^{\prime}=c(u^{\prime},a^{\prime},a_m^{\prime})$ and
$(u,a,a_m)\neq (u^{\prime},a^{\prime},a_m^{\prime}),$
\begin{eqnarray*}
&&\delta_r(c,c^{\prime}) \\
&=& \frac{r+1}{4} 2^m\\
&-& \frac{r+1}{4} \left(\sum\limits_{(x,x_m)} (-1)^{f_{u+u^{\prime}}(x,x_m) +\tr((a+a^{\prime})x)+(a_m +a^{\prime}_m)x_m}       \right)\\
&&+ \frac{r-1}{4}\left(\sum\limits_{(x,x_m)} (-1)^{f_{u^{\prime}}(x,x_m) +\tr(a^{\prime}x)+a^{\prime}_m x_m} \right) \\
&&-\frac{r-1}{4}\left(\sum\limits_{(x,x_m)} (-1)^{f_{u}(x,x_m) +\tr(ax)+a_m x_m} \right)\\
&=&\frac{r+1}{4}(2^m -W_{f_{u+u^{\prime}}}(a+a^{\prime},a_m +a^{\prime}_m))\\
&&+\frac{r-1}{4}(W_{f_{u^{\prime}}}(a^{\prime},a^{\prime}_m)-W_{f_{u}}(a,a_m)).
\end{eqnarray*}

(1) If $u,u^{\prime} \in \F^{\ast}_{2^{m-1}}$ and $u \neq u^{\prime},$ then $f_u, f_{u^{\prime}}$
and $f_{u+u^{\prime}}$ are bent. We get
\begin{eqnarray*}
\delta_r(c,c^{\prime}) & \geq &\frac{r+1}{4}(2^m -2^k)-\frac{r-1}{4}2 \cdot 2^k \\
&=&(r+1)2^{m-2}-(3r-1)2^{k-2}=A.
\end{eqnarray*}

(2) If $u=u^{\prime} \in \F^{\ast}_{2^{m-1}},$
then $(a,a_m) \neq (a^{\prime},a^{\prime}_m),f_{u+u^{\prime}} \equiv 0$ and
\begin{eqnarray*}
&&\sum\limits_{(x,x_m)}(-1)^{f_{u+u^{\prime}}(x,x_m) +\tr((a+a^{\prime})x)+(a_m +a^{\prime}_m)x_m}\\
&=&\sum\limits_{x \in \F_{2^{m-1}}}(-1)^{\tr((a+a^{\prime})x)}\sum\limits_{x_m \in \F_{2}}(-1)^{(a_m+a_m^{\prime})x_m}\\
&=& 0.
\end{eqnarray*}
We get $\delta_r(c,c^{\prime}) \geq \frac{r+1}{4} 2^m-\frac{2r-2}{4} 2^k \geq  A.$

(3) If $u=0$ and $u^{\prime} \in \F^{\ast}_{2^{m-1}},$ then $(a,a_m) \neq (0,0)$ and
\begin{eqnarray*}
\delta_r(c,c^{\prime}) & = &\frac{r+1}{4}(2^m  -W_{f_{u^{\prime}}}(a+a^{\prime},a_m+a^{\prime}_m))\\
&&+\frac{r-1}{4}(W_{f_{u^{\prime}}}(a^{\prime},a^{\prime}_m)-0)\\
& \geq  & \frac{r+1}{4}(2^m  -2^k)-\frac{r-1}{4}2^k \\
&=& (r+1)2^{m-2}-r2^{k-1}>A.
\end{eqnarray*}
Similarly, if $u^{\prime}=0$ and $u \in \F^{\ast}_{2^{m-1}},$ we also have
$$\delta_r(c,c^{\prime}) \geq (r+1)2^{m-2}-r2^{k-1}>A.$$

(4) At last, $u=u^{\prime}=0.$
Then $(a,a_m)\neq (a^{\prime},a^{\prime}_m),(a,a_m)\neq(0,0),(a^{\prime},a^{\prime}_m)\neq (0,0).$
We get
$$\delta_r(c,c^{\prime})=\frac{r+1}{4}(2^m  -0)+\frac{r-1}{4}(0-0) =(r+1)2^{m-2} >A.$$

Therefore $\delta_r(\cC^k_m)\geq A=(r+1)2^{m-2}-(3r-1)2^{k-2}.$ Moreover, for
$(u,a,a_m) \neq (u^{\prime},a^{\prime},a^{\prime}_m),$
$$\delta_r(c,c^{\prime}) \geq A=(r+1)2^{k-1}(2^{k-1}-1)-(r-3)2^{k-2}>0,$$
which means that the codewords $c=c(u,a,a_m)$ and $c^{\prime}=c(u^{\prime},a^{\prime},a^{\prime}_m))$
are distinct. Therefore the size of $\cC^{\ast}_m$ is $K=|\F_{2^{m-1}}|^2 \cdot |\F_2|-1=2^{2m-1}-1.$
This completes the proof of Theorem \ref{thm-caseC}.
\end{proof}

\section{Conclusion}\label{sec-four}

Combinatorial neural codes are binary codes with asymmetric matched metric $\delta_r$
in stead of usual Hamming distance $d_H.$ The Hamming, Singleton and Plotkin bounds for
$\delta_r$ are presented in \cite{CR} in order to judge the goodness of
such codes using the theoretic neuroscience. In this paper we show that a binary code reaches the
Hamming, Singleton or Plotkin bound for $\delta_r$ if and only if the code reaches
the corresponding bound for $d_H$ and $r$ sufficiently closed to 1.

Since the parameters $(n,K,d_H)$ of binary codes reaching one of above three bounds
for $d_H$ are very limited, it is meaningful to present more combinatorial neural codes with
nice and flexible parameters $(n,K,\delta_r).$ In this paper we consider the methods to
construct binary codes from boolean functions given by Ding \cite{Ding2}, present three
constructions by bent functions. Many binary codes constructed
by other types of boolean functions with respect to Hamming distance $d_H$
have been done in past years (see survey paper \cite{LM}). It is hoped that the asymmetric
discrepacy $\delta_r$ can be determined or estimated for such codes
(omitted the zero codeword for linear code case by Theorem ). Moreover, other kinds of binary
codes can also be considered to have good and flexible parameters $(n,K,\delta_r).$
For example, we will show that binary irreducible cyclic codes are one of nice candidates
of combinatorial neural codes in sequential paper. The value of $\delta_r$ for such codes
can be computed by Gauss sums over the finite field extension over $\F_2.$

%\section*{Acknowledgements}

% K.Feng's research was supported by the Natural Science Foundation of China under Grant No:11571107 and 11471178.
% and Tsinghua National Lab. for Information Science and Technology.
% A.Zhang's research was supported by the Natural Science Foundation of China under Grant No:11401468.

\end{document}